\documentclass[journal,twoside,web]{ieeecolor}
\usepackage{lcsys}
\usepackage{cite}
\usepackage{amsmath,amssymb,amsfonts,amstext}
\usepackage{graphicx}
\usepackage{subcaption}
\usepackage{textcomp}
  \def\BibTeX{{\rm B\kern-.05em{\sc i\kern-.025em b}\kern-.08em
     T\kern-.1667em\lower.7ex\hbox{E}\kern-.125emX}}
\markboth{}
{Bang \MakeLowercase{\textit{et al.}}: Safe Merging in Mixed Traffic with Confidence}

\usepackage{todonotes}

\usepackage{bbold}
\usepackage{bbm}
\usepackage{mathrsfs}
\usepackage{verbatim}
\usepackage{amsmath}
\usepackage{amsfonts}
\usepackage{bm}
\usepackage{caption}
\usepackage{float}

\usepackage{amsthm}
\usepackage{mathtools}
\usepackage{cite}
\usepackage{hyperref}

\allowdisplaybreaks

\usepackage{algorithm}
\usepackage{algpseudocode}

\theoremstyle{definition}
\newtheorem{assumption}{Assumption}

\newtheorem{remark}{Remark}
\newtheorem{problem}{Problem}

\theoremstyle{plain}
\newtheorem{proposition}{Proposition}

\DeclareMathOperator*{\minimize}{Minimize}


\makeatletter
\let\NAT@parse\undefined
\makeatother
\usepackage{hyperref}
\hypersetup{
  colorlinks=true,
    linkcolor= blue,
    allcolors=blue,
    citecolor = blue,
    filecolor=black,      
    urlcolor=blue,
    }

\title{Safe Merging in Mixed Traffic with Confidence}


\author{Heeseung Bang$^{1,2}$, \textit{Student Member, IEEE}, Aditya Dave$^{2}$, \textit{Member, IEEE},\\ Andreas A. Malikopoulos$^{2}$, \textit{Senior Member, IEEE}
    \thanks{This research was supported by NSF under Grants CNS-2149520 and CMMI-2219761.}
    \thanks{$^{1}$Heeseung Bang is with the Department of Mechanical Engineering, University of Delaware, Newark, DE 19716, USA.}
    \thanks{$^{2}$ Adita Dave and Andreas A. Malikopoulos are with the School of Civil and Environmental Engineering, Cornell University, Ithaca, NY 14850, USA. {\tt\small email: \{hb489,a.dave,amaliko\}@cornell.edu}}
}

\date{February 2022}
\setlength {\marginparwidth }{2cm} 
\begin{document}

\maketitle
\thispagestyle{empty}
\begin{abstract}
In this letter, we present an approach for learning human driving behavior, without relying on specific model structures or prior distributions, in a mixed-traffic environment where connected and automated vehicles (CAVs) coexist with human-driven vehicles (HDVs). We employ conformal prediction to obtain theoretical safety guarantees and use real-world traffic data to validate our approach. Then, we design a controller that ensures effective merging of CAVs with HDVs with safety guarantees. We provide numerical simulations to illustrate the efficacy of the control approach.
\end{abstract}

\begin{IEEEkeywords}
Conformal Prediction, Connected and Automated Vehicles, Cyber-Physical Human Systems, and Human-Driven Vehicle Models.
\end{IEEEkeywords}

\section{Introduction}
Connected and automated vehicles (CAVs) promise to revolutionize transportation. With 100\% CAV penetration, we can use control algorithms \cite{malikopoulos2019ACC} to improve the efficiency \cite{Bang2023flowbased}, safety \cite{Malikopoulos2020}, flow \cite{chalaki2023minimally}, and equity \cite{Bang2023mem} within any transportation network. However, we expect the percentage of deployed CAVs to rise gradually over time. Therefore, there is an urgent need for control approaches that ensure the effective operation of CAVs among human-driven vehicles (HDVs). 
This is a challenging problem where the inherent uncertainties associated with human driving clash with the need for safety, efficiency, and comfort of any CAV in mixed traffic.
Furthermore, any control algorithm deployed in CAVs must allow real-time implementation in diverse situations.
Therefore, in this letter, we focus on the control problem in mixed traffic when a CAV seeks to merge from an off-ramp onto a highway among oncoming HDVs. 



In recent years, the problem of merging in mixed traffic has attracted considerable attention. Game-theoretic models have been proposed for the interactions between an HDV and an oncoming CAV \cite{schwarting2019social}. Typically, these models consider that the HDV's driving decisions can be interpreted using a social value orientation that trades off between selfish and altruistic behavior. Such models can yield control algorithms for CAVs that utilize theoretical results from leader-follower games \cite{liu2021cooperation} and potential games \cite{liu2023potential}. 
Alternatively, HDV behavior can be framed within the context of car-following models, which postulate that HDVs drive in a manner that follows their preceding vehicle within a driving lane. For example, Gibbs' car following model has been utilized in conjunction with Q-learning to facilitate lane changing of CAVs among HDVs \cite{guo2020merging}. Similarly, Newell's car-following model was utilized with Bayesian optimization to derive a control algorithm that facilitates merging in mixed traffic \cite{Le2023Stochastic}. A typical control approach within all model-based approaches is to learn unknown parameters within the structure of the HDV model and use them to predict the HDV's future trajectory. Due to the known model structure, these approaches provide uncertainty quantification for human driving and performance guarantees on the eventual control algorithm of the CAV.
However, while these control approaches work well in simulation, their guarantees may not hold in practice if actual human driving behavior differs from the implicitly assumed model.

As an alternative to considering specific models for HDVs, data-driven approaches attempt to derive a control strategy with minimal assumptions on human-driving behavior. 
Deep neural networks with long-short-term memory (LSTM) cells have been proposed as general approximators to learn HDV behavior in various driving scenarios \cite{altche2017lstm, park2018sequence}.
By leveraging time-series data, these networks can learn the interdependencies of vehicle motion in high-traffic situations \cite{deo2018convolutional}.
After training a network of sufficient complexity to predict HDV behavior, it is possible to facilitate effective control of CAVs during highway on-ramp merging with control algorithms such as model-predictive control \cite{Nishanth2023AISmerging} or reinforcement learning algorithms such as deep policy gradient \cite{el2021novel}. 
Reinforcement learning has also served as a model-free approach to computing optimal merging strategies for CAVs directly from data \cite{yan2021reinforcement, peng2021connected}. While such data-driven approaches can learn the complex interactions among HDVs and CAVs, they suffer from a lack of formal guarantees and a lack of interpretability. Thus, their solutions cannot be implemented with confidence within the physical realm.


In this letter, we overcome these challenges by developing a control framework that uses conformal predictions for HDVs to merge data-driven learning with the formal safety guarantees of model-based approaches.
Conformal predictions present a calibration methodology to generate set-valued predictions from any black-box model. These set valued predictions are equipped with formal guarantees on the predictive uncertainty \cite{angelopoulos2023conformal}. Conformal prediction has been applied to many prediction problems, including dynamic time-series data \cite{zaffran2022adaptive, stankeviciute2021conformal} and hidden Markov models \cite{nettasinghe2023extending} across an infinite time horizon. It has also been utilized for safe model-predictive control with obstacles in robotics \cite{lindemann2023safe}. Our primary contributions are (1) the development of a framework integrating conformal predictions into the CAV merging problem and (2) the design of a safe learning and control approach utilizing this framework. A salient feature of our framework is that it can enhance the reliability and safety of CAVs with general assumptions on HDV behavior and without assuming specific prior distributions (e.g., Gaussian).





The remainder of the letter proceeds as follows.
In Section \ref{sec:problem}, we explain our modeling framework of CAV merging on-ramp in mixed traffic scenario.
In Section \ref{sec:learning}, we present a learning approach to predict human-driving behavior and resulting traffic conditions.
We introduce conformal prediction and provide a method to incorporate it into the safe merging problem in Section \ref{sec:prediction}. We provide a control strategy for safe merging and demonstrate the effectiveness of the framework through numerical simulations in Section \ref{sec:control}. Finally, we draw concluding remarks in Section \ref{sec:conclusion}.

\section{Modeling Framework} \label{sec:problem}

We consider a mixed traffic scenario where a CAV seeks to merge into a highway among oncoming HDVs (see Fig. \ref{fig:control-zone}).
We define a \emph{control zone} marked by the dashed boundary in Fig. \ref{fig:control-zone}, where we seek to plan the trajectory of the CAV and compute corresponding control inputs.
Since CAV-CAV interactions are well understood \cite{Malikopoulos2020, chalaki2023minimally, Bang2023flowbased}
our focus is on CAV-HDV interactions within the control zone.
We consider longitudinal movements only and assume that CAVs can communicate with infrastructure to obtain position and speed information of other vehicles, but the dynamics of HDVs may be unknown to the CAV.
Within the control zone, the CAV's motion is modeled using double integrator dynamics, i.e.,
\begin{equation}
    \begin{aligned}
    p_\mathrm{c}(t+1) &= p_\mathrm{c}(t) + v_\mathrm{c}(t) \Delta t + \frac{1}{2} u_\mathrm{c}(t) \Delta t^2,\\
    v_\mathrm{c}(t+1) &= v_\mathrm{c}(t) + u_\mathrm{c}(t) \Delta t,
    \end{aligned}
\end{equation}
where $\Delta t\in\mathbb{R}$ is a sampling time, $p_\mathrm{c}(t)\in\mathbb{R}$, $v_\mathrm{c}(t)\in\mathbb{R}$, and $u_\mathrm{c}(t)\in\mathbb{R}$ are the position, speed, and control input at time step $t \in \mathcal{T}$, respectively.
The CAV's speed and acceleration limits are given by
\begin{align}
    u_{\text{c}}^{\text{min}} & \leq u_\mathrm{c}(t) \leq u_{\text{c}}^{\text{max}}, \label{eqn:ulim}\\
    0 < v^{\text{min}} & \leq v_\mathrm{c}(t) \leq v^{\text{max}}, \label{eqn:vlim}
\end{align}

Let $\mathcal{N}=\{1,\dots,N\}$ denote the set of HDVs on the highway.
For each HDV $n\in\mathcal{N}$, the variables $p_n(t)$ and $v_n(t)$ represent position and speed at time $t\in \mathcal{T} := \{0,\dots,T\}$, respectively. As time and position have a one-to-one mapping in highway-driving (non-stopping) vehicles, there exists a function $\tau_n$ that yields a certain time for HDV $n$ to arrive at a specific position, i.e., $\tau_n(\cdot) = p_n^{-1}(\cdot)$.
For a CAV to merge onto the highway, we consider a finite set of merging candidates $\mathcal{L}=\{1,\dots,L\}$, where each candidate point $\ell\in\mathcal{L}$ is located in index order within a merging lane with equal distances (see Fig. \ref{fig:control-zone}).
To guarantee safety between HDVs and a CAV at a merging candidate $\ell\in\mathcal{L}$, we impose the time headway constraint
\begin{equation}
    |T^\mathrm{m} - \tau_n\left(p^{\mathrm{m},\ell}\right)| > \delta,~~\forall n\in\mathcal{N}, \label{eqn:time_headway} 
\end{equation}
where $p^{\mathrm{m},\ell}$ is the location of merging candidate $\ell$ and $T^\mathrm{m} \in \mathcal{T}$ is the merging time of the CAV.

\begin{figure}
    \centering
    \includegraphics[width=0.9\linewidth]{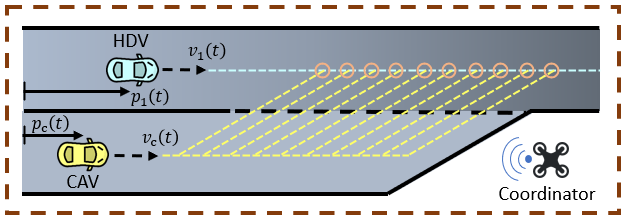}
    \caption{Control zone with one HDV on the highway. The dashed lines represent potential trajectories of the HDV and the CAV. The orange circles represent merging candidates.}
    \label{fig:control-zone}
    \vspace{-14pt}
\end{figure}

For any given merging candidate $\ell$, we consider that the CAV follows an unconstrained energy-optimal trajectory \cite{Malikopoulos2020}, i.e., its trajectory is described by
\begin{align} \label{eq:optimalTrajectory}
    u_\mathrm{c}(t) &= 6 a_\mathrm{c}t + 2 b_\mathrm{c}, \notag \\
    v_\mathrm{c}(t) &= 3 a_\mathrm{c} t^2 + 2 b_\mathrm{c} t + c_\mathrm{c}, \\
    p_\mathrm{c}(t) &= a_\mathrm{c} t^3 + b_\mathrm{c} t^2 + c_\mathrm{c} t + d_\mathrm{c}, \notag
\end{align}
where $a_\mathrm{c}, b_\mathrm{c}, c_\mathrm{c}$, and $d_\mathrm{c}$ are constants of integration, which can be computed using the following boundary conditions
\begin{align}
     p_\mathrm{c}(t_\mathrm{c}^0) = 0,&\quad  v_\mathrm{c}(t_\mathrm{c}^0)= v_\mathrm{c}^0 , \label{eqn:bci}\\
     p_\mathrm{c}(T^\mathrm{m}) = p^{\mathrm{m},\ell},&\quad v_\mathrm{c}(T^\mathrm{m})=v_\mathrm{c}^\mathrm{m}.\label{eqn:bcf}
\end{align}
Since $v^{\text{min}}>0$ from \eqref{eqn:vlim}, there exists a one-to-one mapping from the position trajectory to the merging time. Consequently, we obtain inverse function with respect to $\bm{\psi}=[a_\mathrm{c}, b_\mathrm{c}, c_\mathrm{c}, d_\mathrm{c}]$, i.e., $ T^\mathrm{m} = p_c^{-1}(\bm{\psi})$.
The CAV's optimization problem is formulated as follows.

\begin{problem} \label{pb:merging}
    For a CAV to safely merge in between HDVs, we solve the following optimization problem
    \begin{align}
        \minimize_{\bm{\psi}, v_\mathrm{c}^\mathrm{m}, \ell}~ & T^\mathrm{m}\\
        \emph{subject to: }& \eqref{eqn:ulim} - \eqref{eqn:bcf},~\ell\in\mathcal{L}\notag.
    \end{align}
\end{problem}

The solution to Problem \ref{pb:merging} would enable a CAV to safely merge onto highways while minimizing speed transitions, indirectly enhancing energy consumption and comfort.
However, solving this problem requires predictions on the time $\tau_n(\cdot)$ of each HDV $n \in \mathcal{N}$ to pass each merging candidate $\ell \in \mathcal{L}$. Such predictions are not known a priori and must be obtained from an HDV driving model. 
Several efforts have been made to learn human-driving models, but these approaches often impose strong assumptions about specific probability distributions or structures for the model \cite{schwarting2019social, liu2021cooperation,liu2023potential,guo2020merging,Le2023Stochastic}.
Next, we present a very general model for human driving inspired by Markov decision processes. In subsequent sections, we show how LSTM networks can be employed to learn human-driving behavior and conformal predictions can offer theoretical guarantees to the learned model.

\subsection{Human Driving Behavior}

We consider a decision-making process for HDVs as illustrated in Fig. \ref{fig:human_model}. The observation of each HDV $n \in \mathcal{N}$ at any time $t$ is $o_n(t)=(x_{n-1}(t),x_n(t),x_{n+1}(t),x_{\mathrm{c}}(t))$, where $n-1$ denotes the vehicle leading the HDV $n$, $n+1$ is the vehicle preceding the HDV $n$, and $x_{*}(t) = \big(p_{*}(t), v_{*}(t) \big)$ is the state of a vehicle at time $t$.
The driver of HDV $n$ also receives an instinctive driving impulse, modeled as an uncontrolled disturbance $w_n(t)$ that takes values in a set $\mathcal{W}$. This disturbance captures the inherent stochasticity of human driving, and we consider that $\{w_n(t): t \in \mathcal{T}\}$ is a sequence of independent variables. Finally, we denote the internal state of the driver of HDV $n$ at time $t$ by $s_n(t)\in\mathcal{S}$ and consider this internal state to be an abstract representation of the current driving intent of the human, including factors such as aggression, biases, conservativeness, attentiveness, etc. To keep our framework general, we do not assume knowledge of the space $\mathcal{S}$. However, we do assume that starting with $s_n(0)$, the internal state $s_n(t)$ evolves as a Markov chain for all $t$:
\begin{equation}
    s_n(t+1) = \phi\left(s_n(t),o_n(t+1),w_n(t)\right). \label{eqn:internal_state}
\end{equation}
Based on the internal state, the driver of HDV $n$ takes an action $u_n(t) = g\left(s_n(t)\right) \in \mathbb{R}$. Substituting this action into the vehicle dynamics, the state of HDV $n$ evolves as 
\begin{equation} \label{eqn:HDV_dynamics}
    x_n(t+1) = f\big(x_n(t),s_n(t)\big).
\end{equation}
The primitive variables that determine the motion of HDV $n$ through \eqref{eqn:internal_state} - \eqref{eqn:HDV_dynamics} are $(x_n(0), s_n(0), w_n(t): t \in \mathcal{T})$. We consider that a CAV receives the observation $o_n(t)$ of the HDV through the coordinator in the control zone. However, the CAV has no knowledge of the internal spaces $\mathcal{S}$ and $\mathcal{W}$, and receives no observations of $s_n(t)$ or $w_n(t)$ for all $n \in \mathcal{N}$ and all $t \in \mathcal{T}$. We impose the following assumptions in our formulation.



\begin{assumption} \label{asm:prior}
We consider that the joint prior distribution on the primitive variables of all HDVs is exchangeable when a CAV has no additional information.
\end{assumption}

Assumption \ref{asm:prior} allows us to treat any two primitives from $\{(x_n(0), s_n(0), w_n(t): t \in \mathcal{T})\}_{n \in \mathcal{N}}$ as exchangeable, a property that will be essential to using conformal prediction in Section \ref{sec:prediction}. This assumption implies that re-ordering the HDVs has no effect on the joint distribution,  
and that a human driver does not distinguish among CAVs and HDVs.
Note that, in real world scenarios, a truck may not be exchangeable with a passenger vehicle. However, if features such as vehicle types or driving purpose are available, they can be added to the list of primitives to ensure exchangeability.

\begin{remark}
    The joint prior on all variables is defined on a probability space $(\Omega, \mathcal{F},\mathbb{P})$, with a sample space $\Omega$, $\sigma$-algebra $\mathcal{F}$ and probability measure $\mathbb{P}$. 
\end{remark}

\begin{remark}
    Consider any two random variables $z^1:(\Omega,\mathcal{F}) \to (Z,\mathcal{Z})$ and $z^2:(\Omega,\mathcal{F}) \to ({Z},\mathcal{Z})$, where $\mathcal{Z}$ is a $\sigma$-algebra on set ${Z}$. Then, $z^1$ and $z^2$ are \textit{exchangeable} if for any $A, B \in \mathcal{Z}$, we have $\mathbb{P}(z^1 \in A, z^2 \in B) = \mathbb{P}(z^1 \in B, z^2 \in A)$.
\end{remark}

\begin{assumption} \label{asm:influence}
Human driving behavior is influenced by the behavior of surrounding vehicles, including the leading, rear, and merging vehicles.
\end{assumption}

In our formulation, we restrict attention to the surrounding vehicles when defining the observation of an HDV and do not consider other environmental or weather-related effects. Considering such factors within the human driving model presents an intriguing direction for future research.

\begin{figure}[t!]
    \centering
    \includegraphics[width=0.9\linewidth]{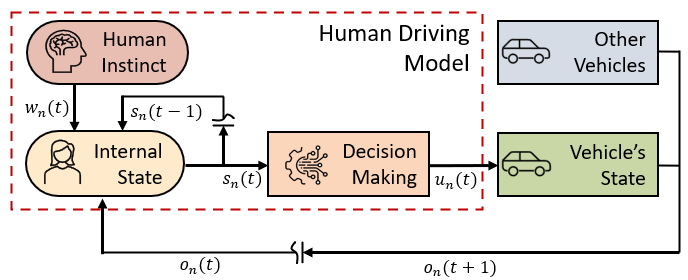}
    \caption{Decision-making loop for a human driver.}
    \label{fig:human_model}
    \vspace{-14pt}
\end{figure}

\section{Learning Human Driving Behavior} \label{sec:learning}

In this section, we present an approach to learn a data-driven model for human driving decisions. To this end, we use an encoder-decoder neural-network architecture with an LSTM, inspired by the notion of approximate information states for partially observed reinforcement learning \cite{subramanian2022approximate, Nishanth2023AISmerging}.
The encoder comprises two fully connected linear layers of dimensions $(8,10)$ and $(10,6)$ with ReLU activation, followed by an LSTM cell with a hidden state of size $6$ denoted by $\hat{s}_n(t) \in \mathbb{R}^6$ at any $t \in \mathcal{T}$.
The input to the linear layers at each $t$ includes the most recent observation $o_n(t)\in\mathbb{R}^8$ of an HDV $n \in \mathcal{N}$, which is supplied as a tuple of the surrounding vehicles' and its own states. The output of the linear layers is fed into the LSTM in addition to the previous hidden state $\hat{s}_n(t-1)$, and the LSTM's output is the hidden state $\hat{s}_t$ at time $t$. Note that as a result of the recursive nature of the LSTM, at each $t\in\mathcal{T}$, we can interpret the hidden state of the LSTM as a compression of the history of observations of HDV $n$, i.e., $\hat{s}_n(t) = \sigma(h_n(t),t)$ for some compression function $\sigma(\cdot)$, where $h_n(t)=(o_n(0),\dots,o_n(t))$ is the history.
Then, we can treat the hidden state of the LSTM $\hat{s}_t$ as a deterministic approximation of the human's internal state $s_t$ for all $t \in \mathcal{T}$ \cite{subramanian2022approximate, Nishanth2023AISmerging}. 
Subsequently, the output of the encoder $\hat{s}_n(t)$ is fed into the decoder to generate an estimation of the arrival time of HDV $n$ at each merging candidate $\ell$, i.e., $\mu^\ell\left(\hat{s}_n(t)\right)$.
The decoder consists of $L$ sequential modules with two linear layers each with a dimension of $(6,8)$ and $(8,1)$, each layer equipped with ReLU activation.


\subsection{Training} \label{subsection:training}

We train the network using real-world vehicle trajectories recorded at exits and entries of highways \cite{exiDdataset}. The dataset was collected with a sampling time of $0.04$ seconds.
At each sampling time, we capture the longitudinal positions and speeds of the vehicles on the merging lane and the highway lane adjacent to the merging lane, and we gather that information as a history of observation of each vehicle.
Our prediction target comprised the arrival times of these vehicles at all merging candidates ($L=10$).
We utilized the mean-square error as the loss function during training to minimize the discrepancy between our predictions and the target data.

Figure \ref{fig:training_result} provides an example of the prediction with our trained model, illustrating the predicted (relative) arrival time of an HDV to $10$ merging candidates. This visualization highlights the nature of our predictor: within a trajectory, uncertainty diminishes as the predictor accumulates more data.

\begin{figure}
    \centering
    \includegraphics[width=0.49\linewidth]{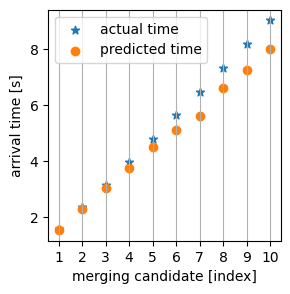}
    \includegraphics[width=0.49\linewidth]{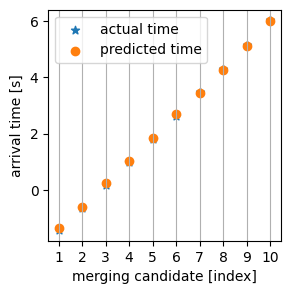}
    \caption{Prediction of HDV's arrival time to merging candidates with our trained model. The prediction were made at time $t=0$ (left) and $t=T^\mathrm{m}/3$ (right).}
    \label{fig:training_result}
   
\end{figure}





\section{Prediction with Confidence} \label{sec:prediction}

In this section, we employ conformal prediction on the trained HDV model to obtain a set-valued predictive region for the arrival time of any HDV $n \in \mathcal{N}$ at a merging candidate $\ell \in \mathcal{L}$.
Conformal prediction is a method for establishing such a predictive region for complex or black-box predictive models, such as our encoder-decoder architecture. It provides theoretical guarantees on the uncertainty associated with the predicted range of outputs in the form of a confidence measure without making assumptions on the form of the underlying distribution of the inputs or outputs (e.g., Gaussian).
As a consequence, conformal prediction has the potential to enhance the utilization of a trained model in safety-critical applications. 

Consider the calibration dataset with $K \in \mathbb{N}$ data points $\mathcal{D} = \{(z_k,y_k)\}_{k=1}^K$, where $z_k$ denotes the input and $y_k$ represents the corresponding output. For the next potential data point $(z_{K+1},y_{K+1})$ that we may receive, conformal prediction seeks a set $\mathcal{C}$ such that $(z_{K+1},y_{K+1}) \in \mathcal{C}$ with a high probability, i.e.,
\begin{equation}
\mathbb{P} \left( (z_{K+1},y_{K+1}) \in \mathcal{C} \right) \geq 1-\varepsilon,
\end{equation}
where $1-\varepsilon \in (0,1)$ is the confidence in $\mathcal{C}$. Furthermore, conformal prediction demands that the set $\mathcal{C}$ be tighter for predictions likely to be accurate and larger for predictions likely to be inaccurate. Thus, conformal predictions quantify the uncertainty associated with a prediction.

An essential condition for employing conformal prediction to construct a valid set $\mathcal{C}$ in the above example is that the data should be \textit{exchangeable}. This assumption is naturally satisfied for static prediction problems such as categorization and labeling.
However, its usage on the time series data has been limited due to the inherent non-exchangeability and interdependencies of data points across the trajectory of a time series.
Recent research efforts have attempted to overcome this limitation
by utilizing adaptive conformal prediction, where the guarantees are asymptotic and may not hold for a finite time horizon \cite{zaffran2022adaptive}. Next, we present an alternate approach to employ conformal prediction specialized to the CAV-HDV merging problem that does not require adaptive prediction and provides tight guarantees.

\subsection{Conformal Prediction for Vehicle Merging}

In our formulation, since the merging of vehicles occurs within a finite time frame, we can construct a conformal predictor for each $t \in\mathcal{T}$. To effectively deploy multiple conformal predictors, we only require two distinct trajectories to be exchangeable and do not need the data to be exchangeable across time within a single trajectory. We will prove in this subsection that exchangeability holds for two distinct trajectories when HDVs follow the general model in Fig. \ref{fig:human_model}. 

To begin creating the predictive ranges, we require a calibration dataset distinct from the training data used in Subsection \ref{subsection:training}. 
We denote this calibration dataset comprising of $K \in \mathbb{N}$ HDV trajectories as $\mathcal{D}_{h} := \{(h_{k}(t),\bm{\tau}_k: t \in \mathcal{T})\}_{k=1}^K$, where $h_k(t)$ represents the history of observations of the $k$-th HDV in the dataset at time $t$, and $\bm{\tau}_k = (\tau_k^1, \dots, \tau_k^L)$ is the vector of actual arrival times of the $k$-th HDV at all merging candidates during the trajectory. Note that within $\mathcal{D}_h$, the $k-1$-th HDV is not necessarily the leading vehicle of the $k$-th HDV, as these trajectories are selected randomly from a large collection of trajectories.
Next, we prove that under Assumption \ref{asm:prior}, the arrival times for all HDVs in $\mathcal{D}_h$ are exchangeable.



\begin{proposition} \label{prop:exchangeable}
Let $\hat{s}_k(t) = \sigma(h_k(t), t)$ be the hidden state of an LSTM from Section \ref{sec:learning} for any $h_k(t)$ in the dataset $\mathcal{D}_h$, at any $t\in\mathcal{T}$. Then, in the corresponding dataset $\mathcal{D}_{\hat{s}} := \{(\hat{s}_k(t), \bm{\tau}_k: t \in \mathcal{T})\}_{k=1}^K$, the tuples $(\hat{s}_i(t), \tau_i^{\ell})$ and $(\hat{s}_j(t), \tau_j^{\ell})$ are exchangeable for all $i,j \in \{1,\dots,K\}$, $\ell \in \mathcal{L}$, and $t \in \mathcal{T}$.
\end{proposition}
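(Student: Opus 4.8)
The plan is to reduce the claim to the exchangeability of the primitive variables granted by Assumption \ref{asm:prior}, using the elementary fact that applying a common deterministic (measurable) map to each member of an exchangeable collection preserves exchangeability. Concretely, I would show that both coordinates of the tuple $(\hat{s}_k(t),\tau_k^{\ell})$ are deterministic functions of the primitives $\{(x_n(0),s_n(0),w_n(t):t\in\mathcal{T})\}_{n\in\mathcal{N}}$, produced by one and the same index-free rule for every HDV, and then transport exchangeability through that rule.

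First I would establish the deterministic dependence. Writing $\xi_n := (x_n(0),s_n(0),w_n(t):t\in\mathcal{T})$ for the primitive block of HDV $n$, I would argue by induction on $t$ that the pair $(x_n(t),s_n(t))$ is a deterministic function of $\{\xi_m\}_{m\in\mathcal{N}}$: the base case is immediate from $\xi_n$, and the inductive step assembles the observation $o_n(t+1)=(x_{n-1}(t+1),x_n(t+1),x_{n+1}(t+1),x_{\mathrm c}(t+1))$ from the already-determined neighbor states and then applies the index-free recursions \eqref{eqn:internal_state} and \eqref{eqn:HDV_dynamics} together with $u_n(t)=g(s_n(t))$. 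Consequently the history $h_n(t)=(o_n(0),\dots,o_n(t))$, and hence $\hat{s}_n(t)=\sigma(h_n(t),t)$, is a deterministic function of the primitives; likewise $\tau_n^{\ell}=p_n^{-1}(p^{\mathrm m,\ell})$ is well defined and deterministic because $v^{\min}>0$ in \eqref{eqn:vlim} makes $p_n(\cdot)$ strictly monotone and therefore invertible. Thus there is a single measurable map $\Phi$, independent of the vehicle index, with $(\hat{s}_n(t),\tau_n^{\ell})=\Phi\big(n;\{\xi_m\}_{m\in\mathcal{N}}\big)$, where the only $n$-dependence is through which block plays the role of ``ego'' and which play the role of neighbors.

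The core step is then to transfer exchangeability through $\Phi$. Fix $t\in\mathcal{T}$ and $\ell\in\mathcal{L}$ and let $\pi=(i\,j)$ be the transposition of the indices $i$ and $j$. Because $\Phi$ is built from the index-free rules $\phi,g,f,\sigma$ and the position-inversion, it is equivariant: relabeling the primitive blocks by $\pi$ relabels the outputs by $\pi$, so in particular evaluating the ego role at $i$ after the swap returns exactly the original output of ego role $j$, and vice versa. Hence the deterministic pair-valued map $\big((\hat{s}_i(t),\tau_i^{\ell}),(\hat{s}_j(t),\tau_j^{\ell})\big)$ is carried to its own swap when $\pi$ is applied to the primitives. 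By Assumption \ref{asm:prior} the joint law of $\{\xi_n\}_{n\in\mathcal{N}}$ is invariant under $\pi$, and pushing this invariance forward through the fixed measurable map yields $\mathbb{P}\big((\hat{s}_i(t),\tau_i^{\ell})\in A,(\hat{s}_j(t),\tau_j^{\ell})\in B\big)=\mathbb{P}\big((\hat{s}_i(t),\tau_i^{\ell})\in B,(\hat{s}_j(t),\tau_j^{\ell})\in A\big)$ for all measurable $A,B$, which is the desired exchangeability.

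The step I expect to be the main obstacle is the equivariance in the previous paragraph, because $(\hat{s}_n(t),\tau_n^{\ell})$ depends not only on $\xi_n$ but also on the neighbor blocks entering through the observations. I would handle this by insisting that $\Phi$ is genuinely index-free---every HDV runs the same $\phi,g,f$ and reads its environment through the same observation template---so that a permutation of the full primitive collection commutes with the dynamics and merely relabels every vehicle's trajectory, history, and arrival times. This is exactly why Assumption \ref{asm:prior} is stated as joint exchangeability of the entire primitive ensemble, and why the dataset indices need not match the physical ordering: it is the full-ensemble invariance, rather than any per-vehicle independence, that makes the pushforward argument go through, and the same reasoning upgrades the pairwise claim to joint exchangeability of all $K$ tuples if needed.
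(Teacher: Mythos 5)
Your proof is correct and takes essentially the same route as the paper's: both reduce the claim to Assumption \ref{asm:prior} and transport exchangeability through the common deterministic maps ($\phi$, $g$, $f$, $\sigma$, and the position inversion defining $\tau_n^{\ell}$), which the paper does in stages (primitives $\to$ states $\to$ histories and arrival times $\to$ LSTM hidden states, writing out the last step with densities and indicator functions). If anything, your explicit induction and full-ensemble equivariance argument is more careful than the paper at the one step it merely asserts---that $x_i(t)$ and $x_j(t)$ are exchangeable even though each depends on the neighbors' primitives through the observations---so there is no gap to report.
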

\begin{proof}
    Assumption \ref{asm:prior} implies that the primitives of any two HDV's are exchangeable. Using this assumption with  \eqref{eqn:internal_state} - \eqref{eqn:HDV_dynamics}, note that the states $x_i(t)$ and $x_j(t)$ of any two HDVs $i$ and $j$ are exchangeable for all $t \in \mathcal{T}$. Furthermore, since the history of each HDV is a collection of states and the arrival time is a deterministic function of the states and primitives, this also implies that $(h_i(t), \tau_i^{\ell})$ and $(h_j(t), \tau_j^{\ell})$ are exchangeable for all $\ell \in \mathcal{L}$ and $t \in \mathcal{T}$.
    Next, consider Borel subsets $S^{1}, S^{2} \subset \mathbb{R}^{6}$ and $T^{\ell,1}, T^{\ell,2} \in \mathbb{R}_{\geq0}$ for the realizations of the LSTM's hidden state and HDV's arrival time at candidate $\ell$, respectively. 
    Let $\mathbb{P}({S}^{1}, {S}^{2}, {T}^{\ell,1}, T^{\ell,2}) := \mathbb{P}\big(\hat{s}_i(t) \in S^{1}, \, \hat{s}_j(t) \in S^{2}, \, \tau_i^{\ell} \in T^{\ell,1}, \, \tau_j^{\ell} \in T^{\ell,2}\big)$ and $\mathbb{P}({S}^{2}, {S}^{1}, {T}^{\ell,2}, T^{\ell,1}) := \mathbb{P}\big(\hat{s}_i(t) \in S^{2}, \, \hat{s}_j(t) \in S^{1}, \, \tau_i^{\ell} \in T^{\ell,2}, \, \tau_j^{\ell} \in T^{\ell,1}\big)$. Then, using probability density functions to simplify the notation, $\mathbb{P} \left(S^1,S^2,T^1,T^2 \right) = \int_{S^{1,2}} \int_{T^{1,2}} \int_{H^{1,2}} p \left( s^1,s^2,\tau^1,\tau^2,h^1,h^2 \right) ds~d\tau~dh =\int_{S^{1,2}} \int_{T^{1,2}} \int_{H^{1,2}} \mathbb{I}\left[ s^1 = \sigma(h^1)\right] \cdot \mathbb{I}\left[ s^2 = \sigma(h^2)\right] \cdot p \left( \tau^1, \tau^2, h^1, h^2 \right) ds~d\tau~dh =\int_{S^{1,2}} \int_{T^{1,2}} \int_{H^{1,2}} \mathbb{I}\left[ s^2 = \sigma(h^2)\right]\cdot \mathbb{I}\left[ s^1 = \sigma(h^1)\right]\cdot p \left( \tau^2, \tau^1, h^2, h^1 \right) ds~d\tau~dh = \mathbb{P} \left(S^2,S^1,T^2,T^1 \right)$,
    where $\mathbb{I}$ is the indicator function, $s = (s^1, s^2)$, $\tau = (\tau^{\ell,1}, \tau^{\ell,2})$ and $h = (h^1, h^2)$, and $S^{1,2} = S^1 \times S^2$, $T^{1,2} = T^{\ell,1} \times T^{\ell,2}$ and $H^{1,2}$ is the product of the spaces of the histories. Here, in the first equality, we use the definition of the LSTM's internal state as a compression of the history, and in the second equality, we use the exchangeability of $(h_i(t), \tau_i^{\ell})$ and $(h_j(t), \tau_j^{\ell})$.
\end{proof}

Proposition \ref{prop:exchangeable} enables the use of conformal prediction directly for the input-output pair $(\hat{s}_k(t), \tau_k^{\ell})$ for any $t \in \mathcal{T}$ and merging candidate $\ell \in \mathcal{L}$. 
Given dataset $\mathcal{D}_{\hat{s}}$, for each $\ell\in\mathcal{L}$ and confidence level $1-\varepsilon$, we seek to find a bound $C^\ell(t)\in\mathbb{R}_{\geq0}$ for the $(K+1)$-the trajectory, such that
\begin{equation} \label{eqn:cp_prob}
\mathbb{P} \big( \big|\tau_{K+1}^\ell - \mu^\ell(\hat{s}_{K+1}(t))\big| < C^\ell(t) \big) \geq 1-\varepsilon,
\end{equation}
where, recall that $\mu^{\ell}(\cdot)$ refers to the decoder of the neural-network trained in Subsection \ref{subsection:training} and its output $\mu^{\ell}(\hat{s}_{K+1}(t))$ is the CAV's prediction at any time $t \in \mathcal{T}$ for the arrival time of the $(K+1)$-th HDV at candidate $\ell$. We construct $C^\ell(t)$ using a standard technique for regression problems \cite{lindemann2023safe}.
Let $r_k^{\ell}(t):=|\tau_k^\ell - \mu^\ell(\hat{s}_k(t))|$ be a nonconformity score, which measures the inaccuracy of the prediction.
For each HDV $n$ and merging candidate $\ell$, we compute all the nonconformity scores from the given dataset $\mathcal{D}_{\hat{s}}$ at each $t \in \mathcal{T}$. Then, we sort them into ascending order, i.e., we consider the re-ordering of scores $\bar{r}^{\ell}_{1}(t),\dots,\bar{r}^{\ell}_{K}(t)$ such that $\bar{r}^{\ell}_{i}(t) \geq \bar{r}^{\ell}_{i-1}(t)$ for all $i = 2, \dots, K$.
Then, we determine the conformal bound as
\begin{equation} \label{eqn:conformal_bound}
    C^{\ell}(t) = \bar{r}^{\ell}_q(t)~~\text{with}~~q:=\lceil (K+1)(1-\varepsilon) \rceil,
\end{equation}
where $\lceil \cdot \rceil$ is the ceiling function. 

\begin{remark} \label{remark:conformal_range}
    Since Proposition \ref{prop:exchangeable} establishes the exchangeability of all HDVs, during system operation, the CAV can treat any HDV $n \in \mathcal{N}$ generating the $(K+1)$-th trajectory. Thus, during operation, the conformal range constructed by the CAV at time $t$ for the arrival time of HDV $n \in \mathcal{N}$ at candidate $\ell$ is 
    \begin{equation}
        \mathcal{C}_{n}^{\ell}(t) = \left[\mu^{\ell}(\hat{s}_n(t)) - C^{\ell}(t) \,,\, \mu^{\ell}(\hat{s}_n(t)) + C^{\ell}(t)\right].
    \end{equation}
    For any decoder $\mu^{\ell}(\cdot)$, this range contains the true arrival time $\tau_n(p^{\mathrm{m},\ell})$ of HDV $n \in \mathcal{N}$ with probability of $1-\varepsilon$ \cite[Theorem 1]{tibshirani2019conformal}.
\end{remark}


Using \eqref{eqn:conformal_bound}, we constructed conformal ranges for the predictions of the neural network trained in Section \ref{sec:learning} using $500$ calibration trajectories (distinct from the training data) and setting a confidence level of $(1-\varepsilon) = 0.9$. On validating the performance of the trained model equipped with a conformal predictor over $100$ trajectories distinct from both the training and calibration data sets, we observed that across time steps and different merging candidates, the true merging time of an HDV lay within the conformal range $91.28\%$ of the time. One instance of this testing is illustrated in Fig. \ref{fig:cp_result}, where blue solid lines indicate predictions of the HDV's arrival time at two different merging candidates across $t \in \mathcal{T}$, and the shaded area indicates conformal range with confidence level $0.9$. Note that the conformal range shrinks over time. This is because the prediction uncertainty of the neural network decreases with time as the CAV receives more observations of the HDV's behavior and better anticipates future actions of the HDV.

\begin{remark}
    Note that an oncoming CAV merges only at one candidate from $\mathcal{L}$. Thus, to guarantee safe merging with the confidence of $1-\varepsilon$, it is sufficient to construct a conformal range for each candidate $\ell \in \mathcal{L}$ independent of the other candidates.
    As our approach does not seek guarantees on the prediction of all merging candidates simultaneously, it improves the tightness of our conformal ranges when compared with similar offline approaches that predict complete trajectories, e.g., \cite{lindemann2023safe}.
\end{remark}

\begin{figure}
    \centering
    \includegraphics[width=0.8\linewidth]{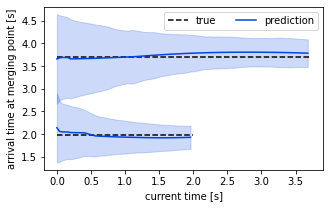}
    \caption{Prediction and conformal prediction range for two merging candidates at each time step.}
    \label{fig:cp_result}
    \vspace{-14pt}
\end{figure}

\section{Safe Merging Strategy} \label{sec:control}

In this section, we introduce a control framework for CAV merging that incorporates the conformal prediction bound $C^\ell(t)$.
Recall that in constraint \eqref{eqn:time_headway}, Problem \ref{pb:merging} requires the actual arrival time $\tau_n(p^{\mathrm{m},\ell})$ from the time trajectory of each HDV $n \in \mathcal{N}$ for each $\ell \in \mathcal{L}$, which is unknown to the CAV. Thus, at each $t \in \mathcal{T}$, the CAV utilizes a model from Subsection \ref{sec:learning} to predict the arrival time $\mu^{\ell}(\hat{s}_n(t))$ with a conformal range $\mathcal{C}^{\ell}_n(t)$ from Remark \ref{remark:conformal_range}. Then, the CAV aims to satisfy the following surrogate constraint at each $t \in \mathcal{T}$ instead of \eqref{eqn:time_headway}:
\begin{equation}
    |\mu^\ell(\hat{s}_n(t))-T^\mathrm{m}| \geq \delta + C^\ell(t).\label{eqn:time_headway_cp}
\end{equation}
Note that imposing \eqref{eqn:time_headway_cp} for all $n \in \mathcal{N}$ ensures that $\mathbb{P}\big(|T^\mathrm{m} - \tau_n\left(p^{\mathrm{m},\ell}\right)| > \delta,~\forall n\in\mathcal{N}\big) \geq 1 - \varepsilon$ during the planning at any $t \in \mathcal{T}$. Next, we present the CAV's new merging problem.

\begin{problem} \label{pb:merging_with_cp}
    At each $t \in \mathcal{T}$, the CAV solves the following optimization problem for a given confidence level $1-\varepsilon$
    \begin{align}
        \minimize_{\bm{\psi}, v_\mathrm{c}^\mathrm{m}, \ell}~ & T^\mathrm{m}\\
        \emph{subject to: }& \eqref{eqn:ulim}, \eqref{eqn:vlim}, \eqref{eq:optimalTrajectory} - \eqref{eqn:bcf}, \eqref{eqn:time_headway_cp}, \ell\in\mathcal{L}. \notag
    \end{align} 
\end{problem}

In Problem \ref{pb:merging_with_cp}, the CAV's goal is to minimize the merging time $T^\mathrm{m}$ with confidence level $1-\varepsilon$. Note that a feasible solution may not exist in Problem \ref{pb:merging_with_cp} for each $t \in \mathcal{T}$. Next, we provide a simple condition to ensure recursive feasibility.

\begin{proposition} \label{prp:recursive}
    Consider Problem \ref{pb:merging_with_cp} has a feasible solution at $t=0$. Then, if $C^\ell(t+1) \leq C^\ell(t)$ for all $t\in\mathcal{T}$, Problem \ref{pb:merging_with_cp} has a feasible solution for all $t$.
\end{proposition}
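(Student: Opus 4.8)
The plan is to argue by induction on $t$, producing at each step a feasible point of Problem \ref{pb:merging_with_cp} at time $t+1$ from one at time $t$; the base case is the feasibility assumed at $t=0$. For the inductive step I would \emph{retain} the merging candidate $\ell$ and merging time $T^{\mathrm{m}}$ of the solution at time $t$ and let the CAV advance one step along the associated energy-optimal trajectory \eqref{eq:optimalTrajectory}. The first thing to observe is that the restriction of that trajectory to $[t+1,T^{\mathrm{m}}]$ is again an unconstrained energy-optimal trajectory for the updated initial condition \eqref{eqn:bci} at $t+1$ with the unchanged terminal condition \eqref{eqn:bcf}: it traces the same curve in $(p_\mathrm{c},v_\mathrm{c},u_\mathrm{c})$, so it automatically meets the input and speed limits \eqref{eqn:ulim}--\eqref{eqn:vlim}. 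Hence every constraint of Problem \ref{pb:merging_with_cp} except the headway constraint \eqref{eqn:time_headway_cp} is inherited for free, and the whole difficulty collapses to showing that the \emph{same} $T^{\mathrm{m}}$ still satisfies \eqref{eqn:time_headway_cp} at $t+1$.

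For that constraint I would work candidate-by-candidate and read \eqref{eqn:time_headway_cp} geometrically: for each HDV $n$ the admissible merging times are exactly those lying \emph{outside} the open window $\big(\mu^{\ell}(\hat{s}_n(t))-\delta-C^{\ell}(t),\,\mu^{\ell}(\hat{s}_n(t))+\delta+C^{\ell}(t)\big)$, i.e.\ outside the conformal range $\mathcal{C}^{\ell}_n(t)$ of Remark \ref{remark:conformal_range} enlarged by $\delta$. The goal is then to prove that the union of these windows over $n$ does not grow from $t$ to $t+1$, so that the feasible set of merging times is non-decreasing and retains $T^{\mathrm{m}}$. Since enlarging two nested intervals by the common amount $\delta$ preserves inclusion, this reduces precisely to nesting of the conformal ranges,
\[
\mathcal{C}^{\ell}_n(t+1)\subseteq\mathcal{C}^{\ell}_n(t),\qquad\forall n,\ \forall t.
\]

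The hard part will be this last reduction, and I expect it to be the only real obstacle. The hypothesis $C^{\ell}(t+1)\le C^{\ell}(t)$ shrinks only the \emph{radius} of each window, whereas nesting also demands that the \emph{center} $\mu^{\ell}(\hat{s}_n(t))$ not drift by more than the radius shrinks, namely $|\mu^{\ell}(\hat{s}_n(t+1))-\mu^{\ell}(\hat{s}_n(t))|\le C^{\ell}(t)-C^{\ell}(t+1)$; radius monotonicity alone does not guarantee this, since two ranges can both contain the true arrival time $\tau_n^{\ell}$ with decreasing width yet fail to nest. I would therefore close the argument by invoking the natural property of the calibrated predictor---consistent with the monotonically shrinking ranges illustrated in Fig.\ \ref{fig:cp_result} and with the containment $|\mu^{\ell}(\hat{s}_n(t))-\tau_n^{\ell}|\le C^{\ell}(t)$ underlying Proposition \ref{prop:exchangeable}---that consecutive ranges are nested, at which point the inclusion above follows and the retained $T^{\mathrm{m}}$ remains headway-feasible. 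The bookkeeping that the stated condition $C^{\ell}(t+1)\le C^{\ell}(t)$ is the operative hypothesis, together with verifying the center-drift bound for the specific LSTM-based predictor, is where I would expect to spend essentially all of the effort.
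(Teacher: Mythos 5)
Your core argument is the same as the paper's: induct on $t$, retain the candidate $\ell$ and merging time $T^\mathrm{m}$ from the time-$t$ solution, and use $C^\ell(t+1)\le C^\ell(t)$ to conclude that $T^\mathrm{m}$ still satisfies the headway constraint \eqref{eqn:time_headway_cp}. The paper's proof is exactly this chain and nothing more: it writes $|\mu^\ell(\hat{s}_n(t))-T^\mathrm{m}|\ge\delta+C^\ell(t)\ge\delta+C^\ell(t+1)$ and declares \eqref{eqn:time_headway_cp} satisfied at $t+1$; it does not discuss the dynamics constraints \eqref{eqn:ulim}--\eqref{eqn:bcf} at all (your observation that the restriction of the cubic \eqref{eq:optimalTrajectory} to $[t+1,T^\mathrm{m}]$ inherits them is routine but correct, and more than the paper says).

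The ``hard part'' you single out---that the constraint at $t+1$ is centered at the \emph{updated} prediction $\mu^\ell(\hat{s}_n(t+1))$, so radius monotonicity alone does not nest the exclusion windows---is real, and it is precisely the step the paper's proof elides: its displayed inequality compares the time-$t$ center against the time-$(t+1)$ radius, i.e., it implicitly treats $\mu^\ell(\hat{s}_n(t+1))=\mu^\ell(\hat{s}_n(t))$. Under that frozen-prediction reading of \eqref{eqn:time_headway_cp} the paper's one-line proof is complete; under the natural reading, in which the prediction is re-evaluated at each $t$, your drift scenario genuinely breaks the induction, and a condition such as $\mathcal{C}^\ell_n(t+1)\subseteq\mathcal{C}^\ell_n(t)$, equivalently $|\mu^\ell(\hat{s}_n(t+1))-\mu^\ell(\hat{s}_n(t))|\le C^\ell(t)-C^\ell(t+1)$, is needed. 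So you have not missed the paper's argument; you have reconstructed it and correctly diagnosed its weak point. The one caveat on your own write-up is that ``invoking the natural property of the calibrated predictor'' is not a proof: nothing in the conformal construction \eqref{eqn:conformal_bound} or in Proposition \ref{prop:exchangeable} forces consecutive ranges to nest (the scores $\bar{r}^\ell_q(t)$ and the decoder outputs at different $t$ are calibrated independently), so the nesting must either be added as an explicit hypothesis strengthening $C^\ell(t+1)\le C^\ell(t)$, or one must adopt the frozen-prediction interpretation the paper tacitly uses.
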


\begin{proof}
For any $t \in \mathcal{T}$, consider a feasible $T^\mathrm{m}$ satisfying \eqref{eqn:time_headway_cp}. Then,
$|\mu^\ell(\hat{s}_n(t)) - T^\mathrm{m}| \geq \delta + C^\ell(t) \geq \delta + C^\ell(t+1)$, implying that $T^\mathrm{m}$ satisfies \eqref{eqn:time_headway_cp} at time $t+1$. Thus, the result holds by mathematical induction starting at $t=0$.
\end{proof}

Generally, the $C^\ell(t+1) \leq C^\ell(t)$ may not hold unless HDVs behave consistently across time. However, if an HDV does behave consistently, this assumption holds for a good predictor that systematically reduces uncertainty with increasing observations, as illustrated in Fig. \ref{fig:cp_result}. If no new HDVs enter the control zone beyond those in $\mathcal{N}$, recursive feasibility holds under significantly more relaxed conditions because the CAV can slow down to let all HDVs pass before merging.


\subsection{Numerical Simulations}
\begin{figure}
    \centering
    \includegraphics[width=0.49\linewidth]{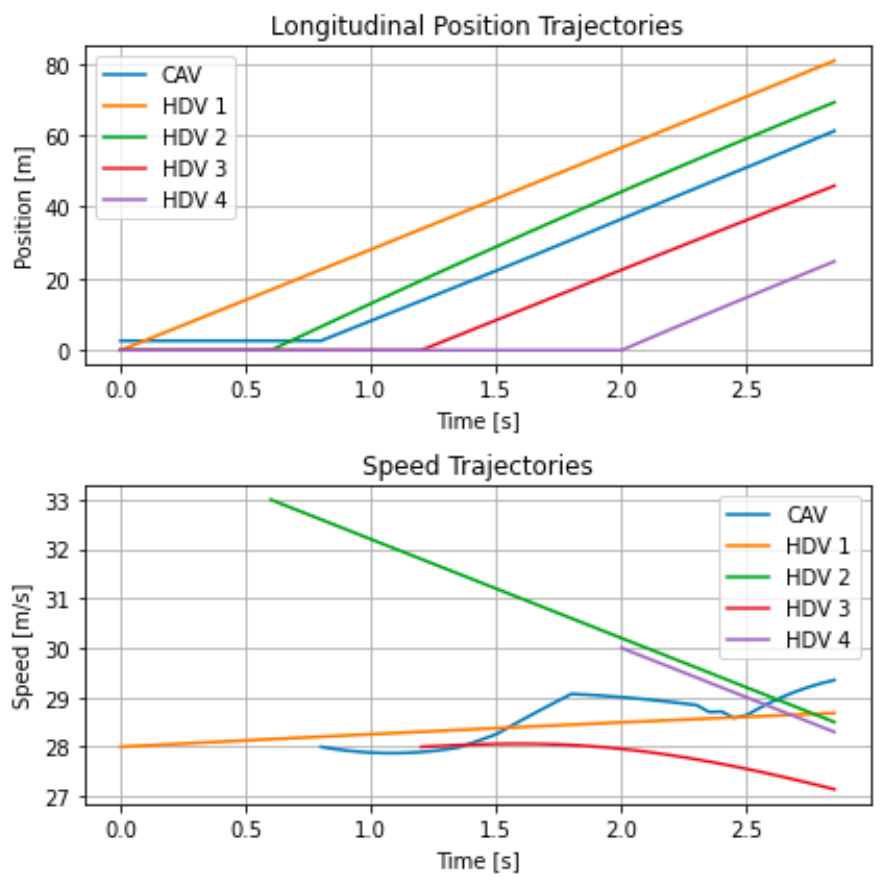}
    \includegraphics[width=0.49\linewidth]{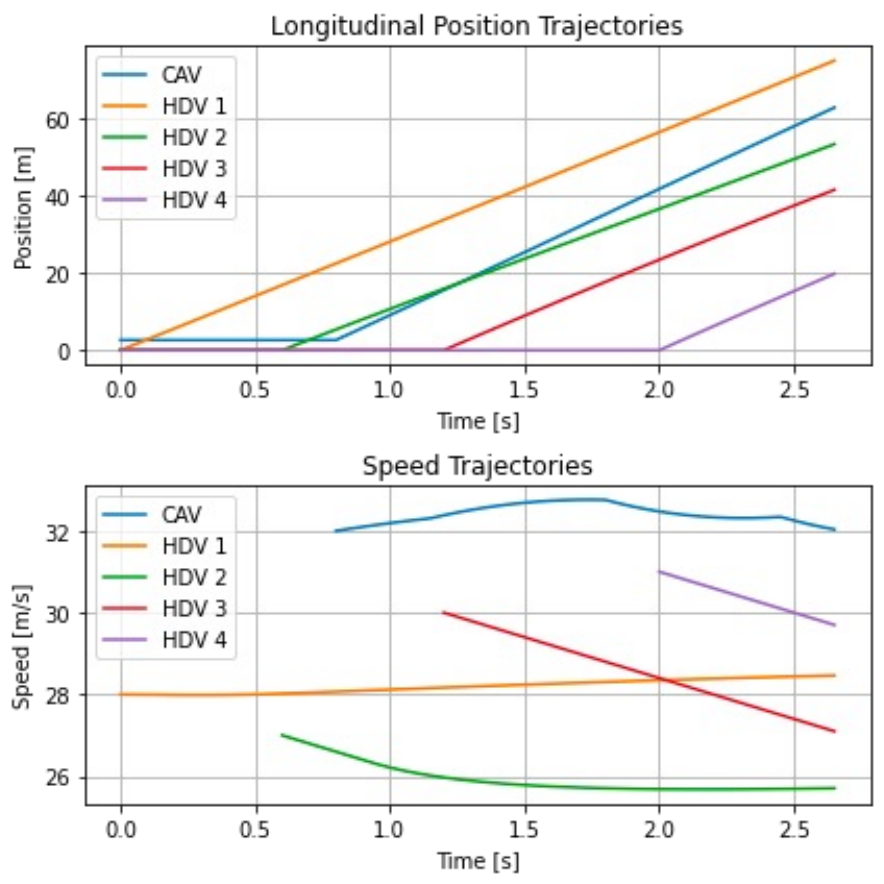}
    \caption{Simulation results for two different initial conditions.}
    \label{fig:sim_results}
    \vspace{-14pt}
\end{figure}
To mimic real HDVs in the simulator, we adopt an intelligent driver model (IDM) with some modifications.
The acceleration of HDV $n$ is given by $u_n(t) = u_\mathrm{IDM}(t) - \rho_n \cdot \exp(-\alpha \cdot \Delta p(t)^2)$, where $u_\mathrm{IDM}(t)$ is an acceleration determined from the general IDM, $\rho_n\in\mathbb{R}_{\geq 0}$ is altruism level of HDV $n$, $\alpha \in \mathbb{R}_{\geq 0}$ is the driver's sensitivity to the CAV, and $\Delta p(t)$ is the longitudinal distance between HDV $n$ and the CAV.
An HDV $n$ with more altruism tends to yield to the CAV as the longitudinal distance decreases. Less altruistic HDVs ignore the CAV and follow the regular IDM. 

We used the neural network in Section \ref{sec:learning} to predict the behavior of each HDV. We trained the network and calibrated the predictions using $3000$ trajectories for each task. Then, we deployed the network to solve Problem \ref{pb:merging_with_cp} numerically.
Figure \ref{fig:sim_results} illustrates two different simulation results, where each column varies in initial speeds and altruism levels for HDVs and maintains the same initial conditions otherwise.
In the left column of Fig. \ref{fig:sim_results}, the speed profile indicates that the CAV attempted to merge initially between HDVs $1$ and $2$ but eventually merged behind HDV $2$. In contrast, the right column shows the CAV merging in front of HDV $2$. These distinct behaviors arise from differences in altruism levels among HDVs $2$ and $3$. These results demonstrate that our approach allows the CAV to adapt safely to various HDV behaviors.

\vspace{-10pt}

\section{Concluding Remarks} \label{sec:conclusion}

In this letter, we proposed a method of learning human driving behavior without assuming specific structure of prior distributions and employing conformal prediction to obtain theoretical guarantees on the learned model.
Utilizing our model, we presented a control framework for a CAV to merge safely in between HDVs.
We validated our approach using real-world traffic data and through numerical simulations.
An interesting avenue for future work is to dynamically optimize confidence levels and consider more complicated interactions.

\vspace{-6pt}

\bibliographystyle{IEEEtran}
\bibliography{Bang, IDS}

\end{document}